\newtheorem{theorem}{Theorem}
\title{Stochastic Deep Model Reference Adaptive Control}
\author{Girish Joshi and Girish Chowdhary
	\thanks{*Supported by the Laboratory Directed Research and Development program at Sandia National Laboratories, a multi-mission laboratory managed and operated by National Technology and Engineering Solutions of Sandia, LLC., a wholly owned subsidiary of Honeywell International, Inc., for the U.S. Department of Energy's National Nuclear Security Administration under contract DE-NA-0003525.}
	\thanks{Authors are with Coordinated Science Laboratory, University of Illinois,
		Urbana-Champaign, IL, USA
		{\tt\small girishj2@illinois.edu,girishc@illinois.edu}}%
}
\begin{document}
	
\maketitle \thispagestyle{empty} \pagestyle{empty}

\begin{abstract}

In this paper, we present a Stochastic Deep Neural Network-based Model Reference Adaptive Control. Building on our work ``Deep Model Reference Adaptive Control", we extend the controller capability by using Bayesian deep neural networks (DNN) to represent uncertainties and model non-linearities. Stochastic Deep Model Reference Adaptive Control uses a Lyapunov-based method to adapt the output-layer weights of the DNN model in real-time, while a data-driven supervised learning algorithm is used to update the inner-layers parameters. This asynchronous network update ensures boundedness and guaranteed tracking performance with a learning-based real-time feedback controller. A Bayesian approach to DNN learning helped avoid over-fitting the data and provide confidence intervals over the predictions. The controller's stochastic nature also ensured "Induced Persistency of excitation," leading to convergence of the overall system signal.

\end{abstract}

\section{Introduction}
Deep Model Reference Adaptive Control (DMRAC) is a control architecture that seeks to learn a high-performance control policy in the presence of significant model uncertainties while guaranteeing stability \cite{joshi2019deep, joshi2020asynchronous,joshi2020design}. DMRAC leverages deep neural networks, which have demonstrated their learning power in many supervised learning applications in computer vision, natural language processing, etc.\cite{goodfellow2016deep}. However, using deep networks for learning control policies can lead to system instability, as is well known in the adaptive control literature \cite{annaswamy_CDC_89, dydek2010adaptive}. Furthermore, learning the parameters of a multi-layer neural network can be computationally intensive and hence cannot be updated in real-time. DMRAC addresses both issues through an asynchronous weight update law for a deep neural network model of the uncertainty. DMRAC temporally separates the slower learning of the inner layers, which form features for the fast learning output layer weights. The outer-most layer weights are adapted in real-time using Lyapunov-based Adaptive laws that have known stability guarantees \cite{annaswamy_CDC_89, Ioannou:96bk,Pomet_92_TAC, Chowdhary:CDC:10}. This online learning of the last layer ensures minimization of system tracking error and provides the necessary robustness and guaranteed tracking performance. The slower learning of the inner-layer weights of the DNN can be done over batches of data, either online or offline, over judiciously collected data sets. The inner layer features can thus be optimized for the final layer, which otherwise can be very difficult to be designed \cite{liu2018gaussian}.  Thereby, the DMRAC controller circumvents the entire feature design problem for complex nonlinear control problems, much like in other deep neural network applications. 

DMRAC takes one critical step towards verifiable deep learning-based adaptive control and improves the power of the learning models for adaptive control. In \cite{joshi2020asynchronous} DMRAC was demonstrated to provide stable controller in the presence of severe faults on a real quadrotor and showed to have a higher level of performance than shallower methods such as Gaussian Process-based MRAC (GP-MRAC) \cite{joshi2018adaptive, chowdhary:2013:TNNLS}. The asynchronous framework used in DMRAC also shows much promise for safety-critical applications and for other learning-based control methods, such as Deep Reinforcement Learning (D-RL). 
However, DMRAC or traditional DNN based learning methods cannot provide confidence bound on their predictions, and are in general susceptible to the critical issue of overfitting, leading to loss of generalizability which could be detrimental to long-term learning. Also, it is well known that when applied to supervised learning problems, deep networks are often incapable of correctly assessing the uncertainty in the training data and often make overly confident decisions about the correct class or prediction.

To address this issue of over-fitting of deep neural networks in estimating model uncertainty in DMRAC, we introduce in this paper a Stochastic-DMRAC (S-DMRAC) architecture using Bayesian neural networks. Apart from the benefits of Bayesian methods in adding a natural regularization to feature learning, we show that using finitely exciting training data, a desirable property of "\textit{Induced Persistency of Excitation}" of the features can be obtained. The persistency of excitation property of the deep features enables the outer layer weights to converge to their ideal values and therefore ensuring a desirable property of convergence of all signals. We empirically demonstrate the controller performance on the wing-rock system and compare the results against the DMRAC controller. We show both empirically and theoretically, the S-DMRAC network weights are well behaved and demonstrate boundedness given the induced persistency of excitation property of stochastic deep features.
\section{Background} 
\subsection{Deep Networks and Feature spaces in machine learning}\label{sec:feature} 
The key idea in machine learning is that a given function can be encoded with weighted combinations of  \textit{feature} vector $\Phi \in \mathcal{F}$, s.t $\Phi(x)=[\phi_1(x), \phi_2(x),...,\phi_k(x)]^T\in \mathbb{R}^k$, and $W^*\in\mathbb{R}^{k \times m}$ a vector of `ideal' weights s.t $\|y(x)-W^{*^T}\Phi(x)\|_\infty<\epsilon(x)$.
Instead of hand picking features, or relying on polynomials, Fourier basis functions, comparison-type features used in support vector machines \cite{schoelkofp:01,scholkopf2002learning} or Gaussian Processes \cite{rasmussen2006gaussian}, DNNs utilize composite functions of features arranged in a directed acyclic graphs, i.e. $\Phi(x)=\phi_n(\theta_{n-1},\phi_{n-1}(\theta_{n-2},\phi_{n-2}(...))))$ 
where $\theta_i$'s are the layer weights. The universal approximation property of the DNN with commonly used feature functions such as sigmoidal, tanh, and RELU is proved in the work by Hornik's \cite{Hornik:NN89} and shown empirically to be true by recent results \cite{2016arXiv160300988M,Poggio2017,2016arXiv161103530Z}. 
Hornik et al. argued the network with at least one hidden layer (also called Single Hidden Layer (SHL) network) to be a universal approximator. However, empirical results show that the networks with more hidden layers show better generalization capability in approximating complex function. While the theoretical reasons behind better generalization ability of DNN are still being investigated \cite{2016Matus,liang2016deep}, for the purpose of this paper, we will assume that it is indeed true, and focus our efforts on designing a practical and stable control scheme using DNNs.

\section{System Description}
\label{system_description}
This section discusses the formulation of Deep Model Reference Adaptive Control \cite{joshi2019deep}. We consider the following system with  uncertainty $\Delta(x)$:
\begin{equation}
\dot x(t) = Ax(t) + B(u(t) + \Delta(x)).
\label{eq:0}
\end{equation}
where $x(t) \in \mathbb{R}^n$, $t \geqslant 0$ is the state vector, $u(t) \in \mathbb{R}^m$, $t \geqslant 0$ is the control input, $A \in \mathbb{R}^{n \times n}$, $B \in \mathbb{R}^{n \times m}$ are known system matrices and we assume the pair $(A,B)$ is controllable. The term $\Delta(x) : \mathbb{R}^n \to \mathbb{R}^m$ is matched system uncertainty and be Lipschitz continuous in $x(t) \in \mathcal{D}_x$. Let $ \mathcal{D}_x \subset \mathbb{R}^n$ be a compact set and the control $u(t)$ is assumed to belong to a set of admissible control inputs of measurable and bounded functions, ensuring the existence and uniqueness of the solution to \eqref{eq:0}.

The reference model is assumed to be linear and therefore the desired transient and steady-state performance is defined by selecting the system eigenvalues in the negative half plane. The desired closed-loop response of the reference system is given by
\begin{equation}
\dot x_{m}(t) = A_{m}x_{m}(t) + B_{m}r(t).
\label{eq:ref model}
\end{equation}
where $x_{m}(t) \in \mathcal{D}_x  \subset \mathbb{R}^{n}$ and $A_{m} \in \mathbb{R}^{n \times n}$ is Hurwitz and $B_{m} \in \mathbb{R}^{n \times r}$. Furthermore, the command $r(t) \in \mathbb{R}^{r}$ denotes a bounded, piece wise continuous, reference signal and we assume the reference model (\ref{eq:ref model}) is bounded input-bounded output (BIBO) stable \cite{ioannou1988theory}.

\subsection{Deep MRAC: Total Controller}
\label{adaptive_identification}
The aim is to construct a feedback law $u(t)$, $t \geqslant 0$, such that the state of the uncertain dynamical system (\ref{eq:0}) asymptotically tracks the state of the reference model (\ref{eq:ref model}) despite the presence of matched uncertainty.

A tracking control law consisting of linear feedback term $u_{pd} = k_xx(t)$, a linear feed-forward term $u_{crm} = k_rr(t)$ and an adaptive term $u_{ad}(t)$ form the total controller 
\begin{equation}
u = u_{pd} + u_{crm} - u_{ad}.
\label{eq:total_Controller}
\end{equation}
The baseline full state feedback and feed-forward controller is designed to satisfy the matching conditions such that $A_{m} = A-Bk_x$ and $B_{m} = Bk_r$. For the adaptive controller ideally we want $u_{ad}(t) = \Delta(x(t))$.

The true uncertainty $\Delta(x)$ in unknown, but it is assumed to be continuous over a compact domain $\mathcal{D}_x \subset \mathbb{R}^n$. A Deep Neural Networks (DNN) have been widely used to represent a function when the basis vector is not known.  Using DNNs, a non linearly parameterized network estimate of the uncertainty can be written as $f_\theta(x) \triangleq \theta_n^T\Phi(x)$,  
where $\theta_n \in \mathbb{R}^{k \times m}$ are network weights for the final layer and $\Phi(x)=\phi_n(\theta_{n-1},\phi_{n-1}(\theta_{n-2},\phi_{n-2}(...))))$, is a $k$ dimensional feature vector which is function of inner layer weights, activations and inputs. The basis vector $\Phi(x) \in \mathcal{F}: \mathbb{R}^{n} \to \mathbb{R}^{k}$ is considered to be Lipschitz continuous to ensure the existence and uniqueness of the solution (\ref{eq:0}). Therefore the adaptive part of the controller can be written as
\begin{equation}
    u_{ad}(t) = W^T\Phi(x(t)).
    \label{eq:adaptive_Controller}
\end{equation}
Where the network weights $\boldsymbol\theta$ are updated minimizing a loss function $\ell(\boldsymbol{Z_i, \theta})$ over a batch of carefully collected locally exciting data $(\boldsymbol{Z_i})$ using Model Reference Generative Network (MRGeN) \cite{joshi2018adaptive, joshi2019deep}. 
$$\boldsymbol{\theta} = \arg \max_\theta \left(-\frac{1}{M}\sum_{i=1}^M \ell(\boldsymbol{Z_i, \theta})\right).$$

And $W$ are the fast updated weights, which are learned according to MRAC rule.
$$\dot{W} = -\Gamma\Phi(x(t))e(t)^TpB.$$
The weight $W$ will replace the final layer network weights $\theta_n$ in $f_\theta$ to form the adaptive control element \eqref{eq:adaptive_Controller} in the total controller \eqref{eq:total_Controller}.

Developing on this idea of a Lyapunov-based  method for adaptation laws of the output-layer weights of a DNN model in real-time while a data-driven supervised learning algorithm is used to update the inner-layer weights of the DNN; we present in this paper a stochastic deep MRAC architecture using Bayesian learning (S-DMRAC).

\section{Bayesian Deep Neural Networks}
Bayesian Neural Networks(BNN) are comprised of a Probabilistic Model and a Neural Network. The intent of such a design is to
combine the strengths of Neural Networks and
Stochastic modeling. Neural Networks exhibit
universal approximation property over continuous functions \cite{hornik1991approximation}. Probabilistic models  allows us to combine prior and likelihood to
produce probabilistic guarantees on the model predictions through generating distribution over the parameters learned from the observations. Therefore the network can specify the confidence interval over the prediction. Thus BNNs are a unique combination of neural network and stochastic models. 

The layer weights in the Bayesian neural networks are represented by probability distributions over possible values, rather than a single fixed value as in the traditional neural networks. The trained deep features and therefore adaptive element will be a stochastic quantity. 
The amount of perturbation each weight exhibits is learnt in a way that coherently explains variability in the training data. Thus instead of training a single network, the Bayesian neural network trains an ensemble of networks, where each network has its weights drawn from a shared, learnt probability distribution. 

A neural network can be viewed as a probabilistic model $P(\mathcal{Y}|\mathcal{X}, \boldsymbol {\theta})$. We assume $y$ is a continuous variable and $P(\mathcal{Y}|\mathcal{X},\boldsymbol{\theta})$ is gaussian likelihood. Give the data $Z^{M} = \{\{x_i, y_i\} \in \mathcal{X} \times \mathcal{Y}\}_{i=1}^{M}$. The total likelihood can be written as,
\begin{equation}
    p(Z|\boldsymbol{\theta}) = \prod_i\left(p(y_i|x_i, \boldsymbol{\theta})\right),
\end{equation}
which is function of network parameters $\boldsymbol{\theta}$. Assuming a gaussian likelihood, performing maximization of negative log likelihood leads to least square regression which can suffer over-fitting. Using a prior belief on the parameters $p(\boldsymbol{\theta})$ and likelihood of the data given a deep neural network model, we can obtain a updated posterior belief on the parameter using Bayes rule as follows,
\begin{equation}
    p(\boldsymbol{\theta}|Z) =  \frac{p(Z|\boldsymbol{\theta})p(\boldsymbol{\theta})}{\int_\theta p(Z|\mathbf{\theta})p(\boldsymbol{\theta})d\boldsymbol{\theta}}
    \label{eq:chap5_Bayes_rule}.
\end{equation}
Maximizing the posterior $p(\boldsymbol{\theta}|Z)$ leads to Maximum-a-posterior (MAP) estimate of parameter $\boldsymbol{\theta}$. Computing the MAP estimate has a regularizing effect and can prevent overfitting. The optimization objectives here are the same as maximizing likelihood plus a regularization term coming from the log prior. If we had a full posterior distribution over parameters we could make predictions that take weight uncertainty into account by marginalizing the parameters as follows,
\begin{equation}
    p(y|x,Z) = \int_\theta p(y|x, \boldsymbol{\theta})p(\boldsymbol{\theta}|Z)d\boldsymbol{\theta}.
\end{equation}
This is equivalent to averaging predictions from an ensemble of neural networks weighted by the posterior probabilities of their parameters $\boldsymbol{\theta}$.

\section{Adaptive Control Using Bayesian Deep Neural Networks}
\subsection{Bayesian Feature Learning: Variational Inference}
This section provides the details of the DNN training over data samples observed over n-dimensional input subspace $x(t) \in \mathcal{X} \in \mathbb{R}^{n}$ and m-dimensional target subspace $y\in\mathcal{Y} \in \mathbb{R}^m$. The sample set is denoted as $Z$ where $Z \in \mathcal{X} \times \mathcal{Y}$.

We are interested in the function approximation task for DNN. Let the function $f_{\boldsymbol{\theta}}(x)= \theta^T_n\phi_n(\theta_{n-1},\phi_{n-1}(\theta_{n-2},\phi_{n-2}(...))))$ s.t $f_{\boldsymbol{\theta}}: \mathbb{R}^n \to \mathbb{R}^m$ be the network approximating the model uncertainty with parameters $\boldsymbol{\theta} \in \boldsymbol{\Theta}$, where $\boldsymbol{\Theta}$ is the space of parameters. We assume a training data buffer $\mathcal{B}$ can store $p_{max}$ training examples, such that the set $Z^{p_{max}} = \{Z_i | Z_i \in \mathcal{Z}\}_{i=1}^{p_{max}} = \{(x_i, y_i) \in \mathcal{X}\times\mathcal{Y}\}_{i=1}^{p_{max}}$. A batch of samples can be drawn independently from the buffer $\mathcal{B}$ over probability distribution $p$ for DNN training.

Unlike the conventional DNN training where the true target values $y \in  \mathcal{Y}$ are available for every input $x \in  \mathcal{X}$, in S-DMRAC true system uncertainties as the labeled targets are not available for the network training. We use the part of the network itself (the last layer) with pointwise weight updated according to MRAC-rule as the generative model for the data. The S-DMRAC uncertainty estimates $y_i = {W}^T\Phi(x_i,\theta_1,\theta_2, \ldots \theta_{n-1})$ along with inputs $x_i$ make the training data set $Z^{p_{max}} = \{x_i, y_i\}^{p_{max}}$. We use kernel Independence test \cite{chowdhary2013concurrent} to qualify a data point to be stored in the buffer, to ensure local persistently exciting data is collected. The main purpose of DNN in the adaptive network is to extract relevant features of the system uncertainties, which otherwise is very tedious to obtain with unstructured uncertainty and unknown bounds of the domain of operation. For the details of generative network and data qualification methods refer  \cite{joshi2019deep, joshi2018adaptive} and reference therein.

Unlike Gaussian Processes \cite{liu2018gaussian}, analytical solution for posterior $p(\boldsymbol{\theta}|Z)$ for a multi-layer neural network is intractable.
We therefore approximate the true posterior with a variational distribution $q(\boldsymbol{\theta}|\boldsymbol{\zeta})$. The variational distribution is a known function parameterized by variational parameters $\boldsymbol{\zeta}$ which are estimated online. The variational parameters are estimated online by minimizing the Kullback-Leibler (KL) divergence between variational distribution $q(\boldsymbol{\theta}|\boldsymbol{\zeta})$ and true posterior $p(\boldsymbol{\theta}|Z)$. The KL objective can be written as,
\begin{eqnarray}
    KL\left(q(\boldsymbol{\theta}|\boldsymbol{\zeta})\|p(\boldsymbol{\theta}|Z)\right) &=& \int q(\boldsymbol{\theta}|\boldsymbol{\zeta}) \log\left(\frac{q(\boldsymbol{\theta}|\boldsymbol{\zeta})}{p(\boldsymbol{\theta}|Z)}\right)d\boldsymbol{\theta}, \nonumber \\
    &=&\mathbb{E}_{q(\boldsymbol{\theta}|\boldsymbol{\zeta})}\log\left(\frac{q(\boldsymbol{\theta}|\boldsymbol{\zeta})}{p(\boldsymbol{\theta}|Z)}\right).
\end{eqnarray}
The KL-divergence objective can be written as,
\begin{equation}
    KL\left(q(\boldsymbol{\theta}|\boldsymbol{\zeta})\|p(\boldsymbol{\theta}|Z)\right) = -\mathcal{L}(Z, \boldsymbol{\theta, \zeta}) +\log(p(Z)).
\end{equation}
The term $\mathcal{L}(Z, \boldsymbol{\theta, \zeta})$ is known as evidence lower bound since it defines lower bound on the evidence term $p(Z)$ as follows,
\begin{equation}
 \mathcal{L}(Z, \boldsymbol{\theta, \zeta})    = \log(p(Z))-KL\left(q(\boldsymbol{\theta}|\boldsymbol{\zeta})\|p(\boldsymbol{\theta}|Z)\right).
\end{equation}
Since the KL-divergence term is always positive semi-definite we can write,
\begin{equation}
    \mathcal{L}(Z, \boldsymbol{\theta, \zeta})    \leq \log(p(Z)).
\end{equation}
Therefore, the KL divergence between the variational distribution $q(\boldsymbol{\theta}|\boldsymbol{\zeta})$ and the true posterior $p(\boldsymbol{\theta}|Z)$ is minimized by maximizing the evidence lower bound,
\begin{equation}
    \boldsymbol{\zeta} \xleftarrow[]{}arg\max_{\boldsymbol{\zeta}}\mathcal{L}(Z, \boldsymbol{\theta, \zeta}).
\end{equation}
And the deep feature parameters are drawn from this variational distribution as follows,
\begin{equation}
    \boldsymbol{\theta} \sim q(\boldsymbol{\theta}|\boldsymbol{\zeta)}.
\end{equation}
We use Bayes by Backprop \cite{blundell2015weight} algorithm to learn the latent parameters of variational distribution using stochastic gradient descent algorithm.

\subsection{S-DMRAC Parameter Update using Bayesian Deep Features}
In \cite{joshi2019deep} we have presented the DMRAC update rule and its stability characteristics using the standard deep features. In this paper we will extend the results to stochastic deep features. The deep feature vector is defined as $\Phi^\sigma_n(x) = f_{\boldsymbol{\theta}}\backslash\theta_n$, where $f_{\boldsymbol{\theta}}$ is the entire network and $\sigma$ is the switching signal. Due to stochastic nature of the network, the feature vector $\Phi^\sigma_n(x)$ can be considered as a draw over multivariate normal distribution
\begin{equation}
    \Phi^\sigma_n(x) \sim \mathcal{N}\left(\Bar{\Phi}^\sigma_n(x), \left[G_\sigma(x)G_\sigma(x)^T\right] | \boldsymbol{\theta}\right)
    \label{eq:chap5_phi_dist}.
\end{equation}
Where $\Bar{\Phi}^\sigma_n(x), \left[G_\sigma(x)G_\sigma(x)^T\right]$ are mean and covariance  for the feature.

Using the outer layer weights $W$, updated according to MRAC rule and the deep features from Bayesian deep neural network, we can write adaptive element  $u_{ad}(t)$ in the total controller as,
\begin{equation}
    u_{ad}(t) = W^T\Phi^\sigma_n(x(t)).
    \label{eq:chap5_stochastic_adaptive_element}
\end{equation}
We will show using stochastic stability theory that the following weight update rule leads to mean square uniform ultimate boundedness almost surely. The weight update rule for the final layer weights using MRAC rule is given as
\begin{equation}
    \Dot{W}(t) = -\Gamma\Bar{\Phi}^\sigma_n(x)e(t)^TPB.
    \label{eq:S-DMRAC_parameter_update_mean}
\end{equation}
Where $\Bar{\Phi}^\sigma_n(x)$ is mean of the features given as
\begin{equation}
    \Bar{\Phi}^\sigma_n(x) = \mathbb{E}_{q(\boldsymbol{\theta}|\boldsymbol{\zeta})} \left(\Phi^\sigma_n(x)\right).
\end{equation}
However in case the true expectation is intractable, we can estimate the expectation empirically as follows,
\begin{equation}
    \hat{\Phi}^\sigma_n(x) = \frac{1}{N}\sum_{i=1}^N\left(\Phi^\sigma_{n,i}(x)\right).
    \label{eq:DMRAC_empirical_phi}
\end{equation}
Where $\Phi^\sigma_{n,i}(x)$ is the $i^{th}$ draw from distribution \eqref{eq:chap5_phi_dist} for every given $x(t)$.
Therefore we can modify the weight update law as,
\begin{equation}
    \dot{W} = proj\left(-\Gamma\left(\frac{1}{N} \sum_{i=1}^N \left(\Phi_{n,i}^\sigma(x)\right)\right) e(t)^TPB, W\right).
    \label{eq:S-DMRAC_parameter_update}
\end{equation}
However in the further section we will show that using a stochastic feature presents the property of \textit{Induced Persistency of Excitation}, which ensures the online updated weights converge to true weights neighborhood for all switching signal ``$\sigma$''.
\section{Stability Analysis for Stochastic-DMRAC}
In this section we will introduce the stochastic stability proof for S-DMRAC controller. Using the above stochastic adaptive element \eqref{eq:chap5_stochastic_adaptive_element} in the total controller \eqref{eq:total_Controller} we can rewrite the system dynamics \eqref{eq:0} as follows,
\begin{equation}
    \dot x(t) = Ax(t) + B(-k_xx(t) + k_rr(t)-u_{ad} + f(x)).
\end{equation}
Using the model matching condition we can write,
\begin{equation}
    \dot x(t) = A_{m}x(t) + B_{m}r(t)+B(f(x)-u_{ad}).
\end{equation}
Using the expression for stochastic adaptive element we can rewrite the above expression as
\begin{equation}
    \dot x(t) = A_{m}x(t) + B_{m}r(t)+B\left(f(x)-W^T\Phi_n^\sigma(x)\right).
\end{equation}
Since we know the the feature vector instance is draw over the distribution of parameters $\boldsymbol{\theta} \sim q(\boldsymbol{\theta}|\boldsymbol{\zeta}) $, we can write the random draw of the feature vector as
\begin{equation}
    \Phi_n^\sigma(x) = \Bar{\Phi}_n^\sigma + G_\sigma\xi, 
\end{equation}
where $\Bar{\Phi}_n^\sigma \in \mathbb{R}^k$ is the mean of feature distribution and $G_\sigma\in \mathbb{R}^k$ is variance and $\xi \sim \mathcal{N}(0,I)$. Using the feature instance we can write the closed loop system as
\begin{eqnarray}
\dot x(t) &=& A_{m}x(t) + B_{m}r(t) \nonumber \\
&+& B\left(f(x)-W^T\Bar{\Phi}_n^\sigma(x) - W^TG_\sigma\xi\right).
\end{eqnarray}
We assume for every $\Bar{\Phi}_n^\sigma$ that there exists an ideal weight $W^*$, such that,
\begin{equation}
    f(x) = W^{*T}\Bar{\Phi}_n^\sigma(x) + \epsilon_m^\sigma(x).
\end{equation}
Where $\epsilon_m^\sigma(x) = \Delta(x) - W^{*T}\Bar{\Phi}_n^\sigma(x)$ is network approximation error due to universal approximation theorem, and this error can be bounded as, 
\begin{equation}
    \Bar{\epsilon}^\sigma = \sup_{x \in \mathcal{D}}\left\|f(x) - W^{*T}\Bar{\Phi}_n^\sigma(x)\right\|.
\end{equation}
Using \eqref{eq:ref model}, we can write the error dynamics as,
\begin{equation}
    \dot e(t) = A_{m}e(t)+B\left(\Tilde{W}^T\Bar{\Phi}_n^\sigma(x)+\epsilon_m^\sigma(x)\right) + BW^TG_\sigma\xi.
\end{equation}
Where $\Tilde{W} = W^*-W$.
We can write the above tracking error dynamics as diffusion process as follows,
\begin{eqnarray}
de(t) &=& A_{rm}e(t)dt+B\left(\Tilde{W}^T\Bar{\Phi}_n^\sigma(x)+\epsilon_m^\sigma(x)\right)dt \nonumber \\ &+& BW^TG_\sigma d\xi(t),
    \label{eq:chap5_error_dynamics}
\end{eqnarray}

where $d\xi(t)$ is zero mean Wiener process.
\begin{theorem}
Consider the system in \eqref{eq:0}, the control law of \eqref{eq:total_Controller}, and assume that the uncertainty $f(x):\mathbb{R}^n \to \mathbb{R}^m$ is bounded and Lipschitz continuous on a compact set $x(t) \in \mathcal{D}_x$. Let  the reference signal $r(t)$ is such that the state $x_{m}(t)$ of the bounded input bounded output reference model \eqref{eq:ref model} remains
bounded in the compact ball $\mathcal{B}_m = \{x_{m} : \|x_{m}(t)\| \leq \mathcal{B_\delta}\}$, then the adaptive control $u_{ad} = W^T\Phi^\sigma_n(x)$, guarantee that
the system is mean square uniformly ultimately bounded in
probability a.s.
\end{theorem}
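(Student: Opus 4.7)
The plan is to adapt the classical MRAC Lyapunov argument to the stochastic (Itô) setting induced by the Bayesian features. Concretely, I will work with the composite Lyapunov candidate
\begin{equation*}
V(e,\tilde W) \;=\; e^{T} P e \;+\; \operatorname{tr}\!\bigl(\tilde W^{T}\Gamma^{-1}\tilde W\bigr),
\end{equation*}
where $P=P^{T}\succ 0$ solves $A_m^{T}P + PA_m = -Q$ for some $Q\succ 0$, guaranteed to exist because $A_m$ is Hurwitz. This candidate is $C^{2}$, positive definite, and radially unbounded in $(e,\tilde W)$.

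The first technical step is to apply Itô's formula to the diffusion \eqref{eq:chap5_error_dynamics} together with the ODE for $\tilde W$ (obtained from the projection-based law \eqref{eq:S-DMRAC_parameter_update} and the fact that $\dot{\tilde W}=-\dot W$). The drift terms will reproduce the familiar deterministic MRAC cancellation: the cross term $2e^{T}PB\tilde W^{T}\bar\Phi_n^{\sigma}(x)$ arising from the error drift will cancel against $-2\operatorname{tr}(\tilde W^{T}\bar\Phi_n^{\sigma}(x)e^{T}PB)$ coming from the weight update, modulo the projection operator (which can only make $\mathcal{L}V$ more negative, by the standard projection lemma). The quadratic drift $-e^{T}Qe$ remains, as does the perturbation $2e^{T}PB\,\epsilon_m^{\sigma}(x)$ bounded by $2\|PB\|\,\bar\epsilon^{\sigma}\|e\|$. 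The new ingredient relative to deterministic DMRAC is the Itô second-order correction $\tfrac{1}{2}\operatorname{tr}\!\bigl(BW^{T}G_\sigma G_\sigma^{T}WB^{T}(2P)\bigr)$, which contributes a nonnegative but bounded term provided $\|W\|$ stays bounded (this is precisely where I will invoke the projection operator on $W$ to get a uniform bound $\|W\|\le W_{\max}$) and provided $\|G_\sigma\|$ is uniformly bounded in $\sigma$, which follows from the variational posterior being trained on a bounded data buffer with finite variance.

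Assembling these bounds, the infinitesimal generator will satisfy
\begin{equation*}
\mathcal{L}V \;\le\; -\lambda_{\min}(Q)\,\|e\|^{2} \;+\; 2\|PB\|\,\bar\epsilon^{\sigma}\,\|e\| \;+\; \kappa,
\end{equation*}
where $\kappa$ collects the Itô trace term (uniform in $\sigma$). Completing the square in $\|e\|$ yields $\mathcal{L}V \le -\alpha\|e\|^{2} + \beta$ whenever $\|e\|$ exceeds an explicit threshold $\rho$ depending on $\bar\epsilon$, $\|PB\|$, $W_{\max}$, $\|G\|_{\max}$, and $\lambda_{\min}(Q)$. This is the hypothesis of the stochastic Lyapunov theorem of Kushner / Khasminskii, which then gives mean-square uniform ultimate boundedness in probability of $(e,\tilde W)$, with an ultimate bound that can be written in closed form in terms of $\rho$ and $V$'s level sets. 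Because the bound depends only on uniform constants, it holds for every switching signal $\sigma$, so the conclusion is uniform in the asynchronous inner-layer updates.

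The main obstacles I anticipate are twofold. First, handling the switching signal $\sigma$ rigorously: the feature statistics $(\bar\Phi_n^{\sigma},G_\sigma)$ change at inner-layer update instants, so one must argue that $V$ does not jump adversely at switches (standard common-Lyapunov-function argument, exploiting that $V$ depends on $\sigma$ only through $\tilde W = W^{*}_\sigma - W$ with $W^{*}_\sigma$ bounded uniformly by the universal approximation assumption and boundedness of $\mathcal{D}_x$). Second, justifying the boundedness of the Itô correction term requires that the covariance $G_\sigma G_\sigma^{T}$ be uniformly bounded, which I would establish by noting that the variational posterior is trained by maximizing the ELBO over a finite buffer of bounded samples, so its variance is controlled by the prior. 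Once those two issues are discharged, the stochastic Lyapunov conclusion follows by a direct application of the theorem and the proof closes.
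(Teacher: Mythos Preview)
Your proposal is correct and follows essentially the same approach as the paper: the same composite Lyapunov candidate (up to a harmless factor of $\tfrac{1}{2}$), the same It\^o-generator computation with the cross-term cancellation via the mean-feature update law, the same reliance on projection to bound $\|W\|$ and hence the It\^o trace correction, and the same conclusion that $\mathcal{L}V\le 0$ outside an explicit sublevel set in $\|e\|$. If anything, your treatment of the switching signal and of why $G_\sigma$ is uniformly bounded is more careful than the paper's, which dispatches both points in a sentence.
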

\begin{proof}
Let,
\begin{equation}
    V(e_t, \Tilde{W}(t)) = \frac{1}{2}e_t^TPe_t + \frac{1}{2}tr\left(\Tilde{W}(t)^T\Gamma^{-1}\Tilde{W}(t)\right),
\end{equation}
be stochastic lyapunov candidate. The error term $e_t$ is a realization of drift diffusion process, we assume $t$ as index and not function of time, whereas $W(t)$ is a deterministic process $W(t)$ is a function of time. Note that the lyapunov candidate is bounded above below by

\begin{equation}
\underline{\zeta}\left(e_t, \Tilde{W}\right) \leq V(e_t, \Tilde{W}) \leq \bar{\zeta}\left(e_t, \Tilde{W}\right),
\end{equation}
where $\underline{\zeta}\left(e_t, \Tilde{W}\right) = \frac{1}{2}\underline{\lambda}(P)\|e_t\|^2 + \frac{1}{2}\underline{\lambda}\left(\Gamma^{-1}\right)\|\Tilde{W}\|_F^2$ and $\bar{\zeta}\left(e_t, \Tilde{W}\right) = \frac{1}{2}\bar{\lambda}(P)\|e_t\|^2 + \frac{1}{2}\bar{\lambda}\left(\Gamma^{-1}\right)\|\Tilde{W}\|_F^2$ and $\underline{\lambda}$ and $\bar{\lambda}$ is the minimum and maximum Eigen value operator and $\|\|_F$ is the Frobenius norm. 

The Ito derivative of the Lyapunov candidate along the error dynamics \eqref{eq:chap5_error_dynamics} for $\sigma^{th}$ system,
\begin{eqnarray}
&&\hspace{-10mm}\mathcal{L}V\left(e_t, \tilde{W}\right) = - \dot{W}\Gamma^{-1}\tilde{W}-\frac{1}{2}e^TQe + \tilde{W}\Bar{\phi}_n^\sigma(x)e^TPB \nonumber\\
&&+e^TPB \epsilon_m^{\sigma}(z) + \frac{1}{2}[(BW^TG_{\sigma})(BW^TG_{\sigma})^TP].
\end{eqnarray}
Using the weight update law \eqref{eq:S-DMRAC_parameter_update_mean}, we can simplify the above expression as follows
\begin{eqnarray}
\hspace{-10mm}\mathcal{L}V\left(e_t, \tilde{W}\right) &=& \frac{1}{2}e^TQe + e^TPB\epsilon_m^{\sigma}(z) \nonumber \\
&+& \frac{1}{2}[(BW^TG_{\sigma})(BW^TG_{\sigma})^TP].
\end{eqnarray}
Let $c_1 = 1/2\|P\|\|W\|_F^2\|BG_{\sigma}\|^2$ and $c_2 = \|PB\|$, then
\begin{eqnarray}
\mathcal{L}V\left(e_t, \tilde{W}\right) 
&\leq& -\frac{1}{2} \lambda_{min}(Q)\|e\|^2 + c_2\|e\|\| \epsilon_m^{\sigma}(z)\| + c_1 \nonumber
\end{eqnarray}
We know $\sup_{x\in\mathcal{D}}\|\epsilon_m^{\sigma}(x)\| \leq \bar{\epsilon}^\sigma$, and using the projection operator the outer layer weights $W(t)$ are bounded $\|W\| \leq \mathcal{W}_b$.
Using this bound in above expression we can write,
\begin{equation}
\mathcal{L}V\left(e_t, \tilde{W}\right)  \leq -\frac{1}{2} \lambda_{min}(Q)\|e\|^2 + c_2\|e\| \bar{\epsilon}^\sigma + c_1.
\end{equation}
Let $c_5^{\sigma} = c_2\bar{\epsilon}^\sigma$.

We can define an set outside which the $\mathcal{L}V\left(e_t, \tilde{W}\right)\leq 0$
\begin{equation}
\Theta^{\sigma} = \left\lbrace\|e\| \geq \frac{c_5^{\sigma} + \sqrt{\left(c_5^{\sigma}\right)^2 + 2*\lambda_{min}(Q)c_1}}{\lambda_{min}(Q)} \right\rbrace.
\end{equation}
Therefore, outside the set $\Theta^{\sigma}$,  $\mathcal{L}V\left(e_t, \tilde{W}\right)  \leq 0$ a.s. Using BIBO property of reference model \eqref{eq:ref model}, when $r(t)$ is bounded, $x_{m}(t)$ remains bounded
within $B_m$. Solution to \eqref{eq:0} $x(t) \in \mathcal{D}_x$ a.s. Since this is true for all $\sigma$, and because Algorithm
guarantees that $\sigma$ does not switch arbitrarily fast,\eqref{eq:chap5_error_dynamics} is mean square uniformly ultimately bounded inside set $\Theta^\sigma$ a.s.
\end{proof}

\section{Persistency of Excitation for S-DMRAC}
In multi-layer architecture for uncertainty approximation, it is challenging to ensure the Persistency of Excitation, which result in non P.E deep features and therefore non convergence of the outer-layer weights.  One straightforward way to remedy
this problem is to inject exogenous perturbations into the dynamics of the gradient descent algorithm so that even the parameters in the hidden layers receive persistent excitation during training. The traditional procedure to introduce robustness for linear models is the addition of regularization term into the loss function. However the regularization cannot be re-framed as a method to ensure persistent excitation but can bound model parameters. 

However, we will show that Bayesian-deep neural network is a natural way of adding regularization and the stochastic nature of network weights add randomness in the network prediction and therefore lead to desirable property of induced persistency of excitation.

The KL-objective for posterior inference given data in training Bayesian-deep neural network is follows,
\begin{equation}
    \mathcal{L}(Z, \boldsymbol{\theta, \zeta}) = \mathbb{E}_{q(\boldsymbol{\theta}|\boldsymbol{\zeta})}\left(\log\left(p(Z|\boldsymbol{\theta})\right)\right) - KL\left(q(\boldsymbol{\theta}|\boldsymbol{\zeta})\|p(\boldsymbol{\theta})\right).
\end{equation}
While maximizing $\mathcal{L}(Z, \boldsymbol{\theta, \zeta})$, we are maximizing the model likelihood given data $Z$ and minimizing KL-divergence between posterior and prior on parameters, which is akin to adding regularization.

Uncertainty in predictions arise from two sources; uncertainty in weights called Epistemic uncertainty and the variability coming from the finite excitation in training data know as Aleatoric uncertainty. Epsitemic uncertainty can grows smaller, if we have more training data. Consequently, epistemic uncertainty is higher in regions of no or little training data and lower in regions of more training data. Epistemic uncertainty is covered by the variational posterior distribution.  Aleatoric uncertainty is covered by the probability distribution used to define the likelihood function. The Aleatoric uncertainty cannot be reduced if we get more data, it is the inherent variability in the training data.
The total uncertainty or randomness in the feature vector due to Epistemic and Aleatoric uncertainty, lead to favorable property of induced persistency of excitation of $\Phi^\sigma_n$

\begin{theorem}
\label{theorem:DMRAC_PE_proof}
Let the Bayesian deep network $f_{\boldsymbol{\theta}}$  be trained over data $Z^{p_{max}} = \{\{x_i, y_i\} \in \mathcal{X} \times \mathcal{Y}\}_{i=1}^{p_{max}}$, such that data is collected using kernel independence test. Then the features $\Phi^\sigma_n = f_{\boldsymbol{\theta}}\backslash\theta_n$ is persistently exciting.
\end{theorem}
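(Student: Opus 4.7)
The plan is to exploit the decomposition of the stochastic feature vector into a deterministic mean plus a full-rank Gaussian perturbation, and show that the perturbation alone contributes a uniform positive-definite floor to the integrated outer product. Recall that persistency of excitation for the feature regressor requires constants $\alpha, T > 0$ with $\int_t^{t+T} \Phi^\sigma_n(x(\tau)) \Phi^\sigma_n(x(\tau))^T\, d\tau \succeq \alpha I$ for all $t \geq t_0$. Since $\Phi^\sigma_n$ is stochastic, I would work with the conditional second moment and then transfer the bound to sample paths via a concentration/Borel--Cantelli step at the end.

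First I would write $\Phi^\sigma_n(x) = \bar{\Phi}^\sigma_n(x) + G_\sigma(x)\xi$ with $\xi \sim \mathcal{N}(0,I)$ as in the stability section, and compute the conditional second moment $\mathbb{E}\!\left[\Phi^\sigma_n(x)\Phi^\sigma_n(x)^T \mid x\right] = \bar{\Phi}^\sigma_n(x)\bar{\Phi}^\sigma_n(x)^T + G_\sigma(x) G_\sigma(x)^T$. Both summands are positive semidefinite, so the PE integral splits into an Aleatoric/Epistemic contribution and a mean-feature contribution that can be treated independently. The payoff of this split is that I only need one of the two pieces to be uniformly lower bounded to conclude the theorem.

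Second, I would argue that the variance term is uniformly positive definite. The variational posterior $q(\boldsymbol{\theta}\mid\boldsymbol{\zeta})$ is obtained by maximizing the ELBO against a non-degenerate Gaussian prior; the KL-regularizer in $\mathcal{L}(Z,\boldsymbol{\theta},\boldsymbol{\zeta})$ penalizes collapse of the covariance to zero, so $G_\sigma(x) G_\sigma(x)^T \succeq \beta I$ for some $\beta > 0$ that depends on the prior variance and the Lipschitz constant of the inner layers. Integrating gives $\int_t^{t+T} G_\sigma(x(\tau))G_\sigma(x(\tau))^T\, d\tau \succeq \beta T \cdot I$, independently of the input trajectory. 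I would then use the kernel independence test invoked in the data-buffer construction as a supplementary source of excitation: it guarantees that the stored samples produce mean features $\bar{\Phi}^\sigma_n(x_i)$ that are linearly independent on the buffer, which bounds the first summand below whenever $x(\tau)$ revisits the buffer-covered region, and strengthens $\alpha$ rather than weakening it.

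Finally, I would convert the lower bound on the conditional second moment into an almost-sure PE statement along sample paths. The idea is to treat the draws $\xi_i$ over successive integration windows as an i.i.d.\ sequence, apply a matrix Chernoff/Hoeffding inequality to $\frac{1}{N}\sum_i \Phi^\sigma_{n,i}(x)\Phi^\sigma_{n,i}(x)^T$ and the empirical estimator $\hat{\Phi}^\sigma_n$ from \eqref{eq:DMRAC_empirical_phi}, and close with Borel--Cantelli to discard a null set of pathological draws. The main obstacle I anticipate is uniformizing $\alpha$ and $T$ across the switching signal $\sigma$: the variational parameters $\boldsymbol{\zeta}$ change whenever the inner-layer batch update fires, so $\beta$ and $G_\sigma$ are piecewise constant in time, and I would need the dwell-time condition enforced by the DMRAC asynchronous update rule together with a uniform floor on the prior covariance to prevent $\beta$ from shrinking to zero across switches. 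Once that uniform bound is in place, the PE inequality follows on every window of length $T$ almost surely.
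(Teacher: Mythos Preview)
Your approach and the paper's share the same core idea: decompose the stochastic feature into mean plus Gaussian perturbation and argue that the perturbation's covariance supplies a positive-definite floor, with the kernel independence test invoked to keep the aleatoric component nonzero on the stored data. The paper, however, frames the argument through an autocovariance criterion---it asserts that $R_{\Phi^\sigma_n}(0)>0$ implies the PE integral inequality and then decomposes $R_{\Phi^\sigma_n}(0)$ into aleatoric and epistemic pieces following the predictive-variance decomposition of Kwon et al.---whereas you bound the PE integral directly by extracting a uniform floor $G_\sigma G_\sigma^T\succeq\beta I$ from the KL regularizer in the ELBO and then transfer from conditional second moment to sample paths via matrix concentration and Borel--Cantelli. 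Your route is more explicit about two steps the paper leaves implicit (why the variational posterior covariance cannot collapse to zero, and why a second-moment lower bound yields an almost-sure PE statement along trajectories), and your attention to uniformity across the switching index $\sigma$ via the dwell-time condition addresses an issue the paper's proof does not touch; conversely, the paper's autocovariance framing makes the aleatoric/epistemic split cleaner and connects the argument directly to the Bayesian-uncertainty literature rather than to a pathwise concentration bound.
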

\begin{proof}
We use autocovariance argument to prove the that the stochastic features $\Phi^\sigma_n$ are persistently exciting i.e iff
\begin{equation}
    R_{\Phi^\sigma_n}(0) > 0,
\end{equation}
Implies the integral for any given $t \geq t_0$ and over any interval $[t, t+T]$ for any $T > 0$ satisfies,
\begin{equation}
    \int_t^{t+T}\Phi^\sigma_n(\tau)\Phi^\sigma_n(\tau)^Td\tau \geq \gamma I.
\end{equation}
To prove the fact that deep features of S-DMRAC are PE, we show that the autocovariance $R_{\Phi^\sigma_n}(0)$ exists and $R_{\Phi^\sigma_n}(0) > 0$.

Since the stochastic feature vector are realization of distribution defined as
\begin{equation}
    \Phi^\sigma_n(x) \sim \mathcal{N}\left(\Bar{\Phi}^\sigma_n(x), \left[G_\sigma(x)G_\sigma(x)^T\right] | \boldsymbol{\theta}\right).
    \label{eq:chap5_phi_dist}
\end{equation}
Therefore the autocovariance of feature vector $\Phi^\sigma_n$ can be defined as,
\begin{eqnarray}
    &&\hspace{-15mm}R_{q_{\zeta}(\Phi^\sigma_n|x,\theta)} \left(\Phi^\sigma_n(x)|x\right) =\mathbb{E}_{q_\theta}\left[\left(\Phi^\sigma_n-\mathbb{E}(\Phi^\sigma_n)\right)^2\right] \nonumber \\
    &=&\mathbb{E}_{q_\theta}\left(\Phi^\sigma_n\Phi^{\sigma_n^{T}}\right)-\mathbb{E}_{q_\theta}\left(\Phi^\sigma_n\right)\mathbb{E}_{q_\theta}\left(\Phi^\sigma_n\right)^T,
    \label{eq:chap5_autocovar}
\end{eqnarray}
Where $q_{\boldsymbol{\zeta}}(\Phi^\sigma_n|x,\theta)$ is the variational predictive posterior distribution defined as,
\begin{eqnarray}
q_{\boldsymbol{\zeta}}(\Phi^\sigma_n|x,\theta) &=& p(\Phi^\sigma_n|x, \theta)q(\boldsymbol{\theta}|\boldsymbol{\zeta}).
\end{eqnarray}
and $p(\Phi^\sigma_n|x, \theta)$ is the posterior distribution on feature given a input $x$ and network parameters $\boldsymbol{\theta}$
\begin{equation}
    p(\Phi^\sigma_n|x, \theta) = \mathcal{N}\left(\Bar{\Phi}^\sigma_n(x), \left[G_\sigma(x)G_\sigma(x)^T\right] | \boldsymbol{\theta}\right).
\end{equation}
Using the above definitions of conditional distributions, the autocovariance term \eqref{eq:chap5_autocovar} can be decomposed into \textbf{\textit{Aleatoric}} and \textbf{\textit{Epistemic}} uncertainties as follows \cite{kwon2018uncertainty}:
\begin{eqnarray}
&&\hspace{-15mm}R_{q}\left(\Phi^\sigma_n(x)|x\right) \mathbb{E}_q\left(\Phi^\sigma_n\Phi^{\sigma_n^{T}}\right)-\mathbb{E}_q\left(\Phi^\sigma_n\right)\mathbb{E}_q\left(\Phi^\sigma_n\right)^T \nonumber \\
&&\hspace{-10mm}\Rightarrow \int_\Theta diag\left(\mathbb{E}_{p(\Phi^\sigma_n|x, \theta)}\left(\Phi^\sigma_n\right)\right) \nonumber \\
&& -\mathbb{E}_{p(\Phi^\sigma_n|x, \theta)}\left(\Phi^\sigma_n\right)\mathbb{E}_{p(\Phi^\sigma_n|x, \theta)}\left(\Phi^\sigma_n\right)^Tq_{\boldsymbol{\zeta}}(\boldsymbol{\theta})d\boldsymbol{\theta}\nonumber \\
&&\hspace{-10mm}+\int_\Theta\left[\mathbb{E}_{p(\Phi^\sigma_n|x, \theta)}\left(\Phi^\sigma_n\right)-\mathbb{E}_{q_{\zeta}(\Phi^\sigma_n)}\left(\Phi^\sigma_n\right)\right] \times \nonumber \\
&&\left[\mathbb{E}_{p(\Phi^\sigma_n|x, \theta)}\left(\Phi^\sigma_n\right)-\mathbb{E}_{q_{\zeta}(\Phi^\sigma_n})\left(\Phi^\sigma_n\right)\right]^T q_{\boldsymbol{\zeta}}(\boldsymbol{\theta})d\boldsymbol{\theta}.
\nonumber\\
\label{eq:chap5_total_variance}
\end{eqnarray}
The first term of the predictive variance of the variational posterior distribution \eqref{eq:chap5_total_variance} in known as Aleatoric uncertainty. The expectation is with respect to predictive posterior distribution $p(\Phi^\sigma_n|x, \theta)$ integrated over all parameters $\boldsymbol{\theta}$. The second term of the predictive variance of the variational posterior distribution \eqref{eq:chap5_total_variance} is the epistemic uncertainty, variance in predictions due to uncertainty in weights.
\begin{figure}[tbh!]
    \centering
    \includegraphics[width=0.75\columnwidth]{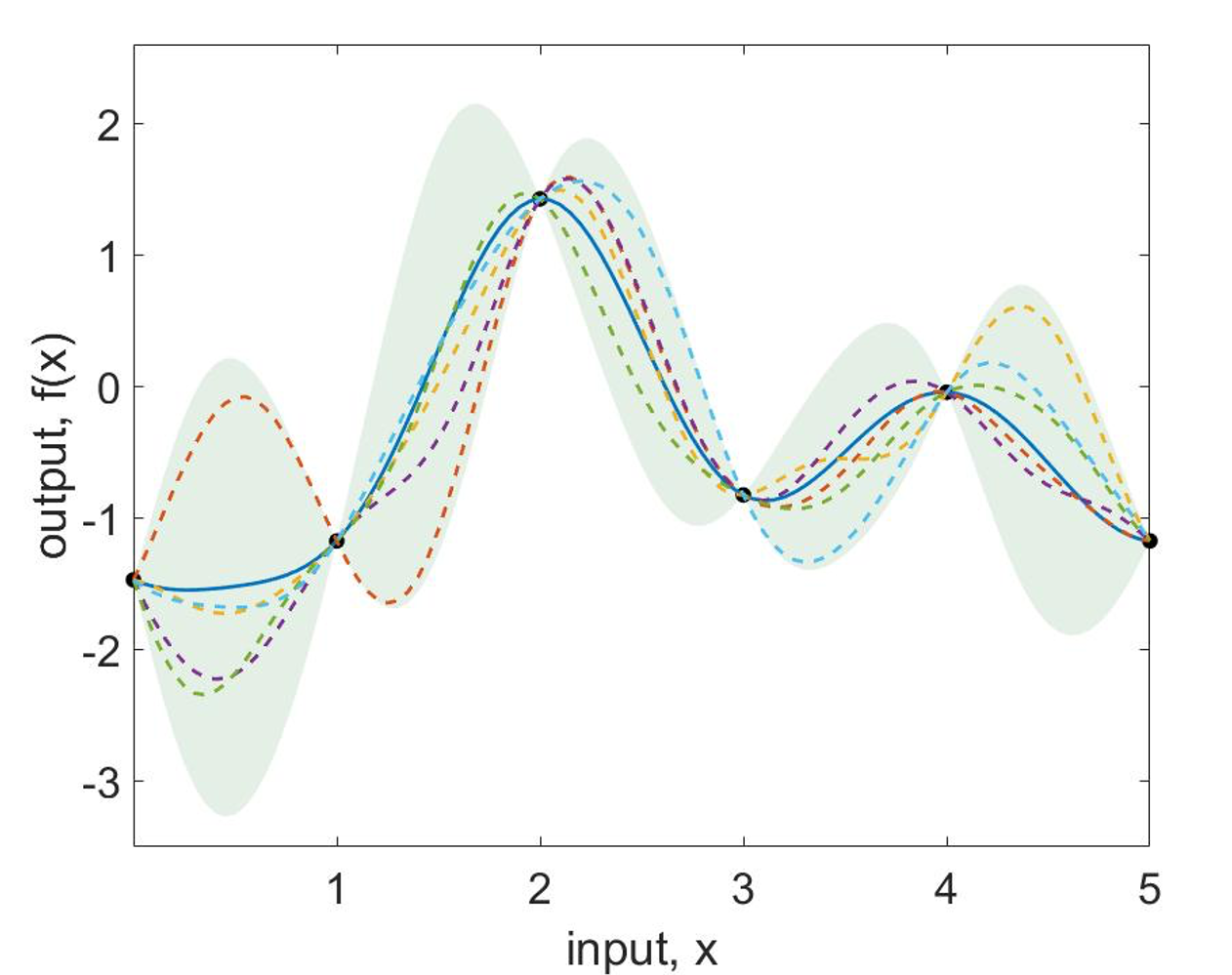}
    \caption{Bayesian Neural Network: Shaded area is non zero Epistemic(prediction) uncertainty and Black dots are training points chosen to ensure non zero Aleatoric uncertainty (kernel independence test)}
    \label{fig:chap5_GP_inference}
\end{figure}

If we chose the training data such that Aleatoric uncertainty is non zero, then we can show that the autocovariance of features $\Phi^\sigma_n$ is non zero everywhere. We take benefit of finite excitation property, and using kernel independence test \cite{chowdhary2013concurrent} and qualify the training data points, such that are sufficiently distinct from the stored data in buffer $\{Z: x, y \in \mathcal{X}, \mathcal{Y}\}_{i=1}^{p_{max}}$ hence ensuring Aletoric uncertainty is non zero and due to fact that the network weights are stochastic the epistemic/prediction uncertainty is non zero everywhere else.

Hence using stochastic Bayesian network and training the network over data points chosen according to kernel independence test we can ensure the total variance $R_{q}\left(\Phi^\sigma_n(x)|x\right) > 0$ and therefore $\Phi^\sigma_n$ is P.E
\end{proof}
\subsection{Uniform boundedness of the Outer-Layer parameters}
Consider the problem of parameter estimation using a linear estimator with stochastic deep features. We learn a given function by minimizing its squared loss error.

Let us assume the function to be estimated be $f: \mathbb{R}^n \to \mathbb{R}^m$. Let the true function $f$ be of form 
\begin{equation}
    f(x) = W^{*T}\Bar{\Phi}_n^\sigma(x) + \epsilon(x),
    \label{eq:DMRAC_ls_system}
\end{equation}
Where the unknown function be linearly parameterized in unknown ideal weight $W^* \in \mathbb{R}^{k \times m}, \forall \sigma$ and $\bar{\Phi}_n^\sigma(x): \mathbb{R}^n \to \mathbb{R}^k$ such that
$
\bar{\Phi}_n^\sigma(x) = \mathop{\mathbb{E}}_\theta\left({\Phi}_n^\sigma(x)\right)
$.
Where ${\Phi}_n^\sigma(x)$ is stochastic nonlinear continuously differentiable basis function which is drawn from distribution \eqref{eq:chap5_phi_dist}

Let the $W(t)$ be the online estimate of true ideal weights $W^*$, therefore the online estimate of $f(x)$ can be represented by mapping $\nu: \mathbb{R}^n \to \mathbb{R}^m$ such that,
\begin{eqnarray}
    \nu(x) &=& W^T{\Phi}_n^\sigma(x)= W^T\left(\bar{\Phi}_n^\sigma(x)+G_\sigma\xi\right).
    \label{eq:DMRAC_ls_estimate_model}
\end{eqnarray}
Let $e(t)$ be defined as,
\begin{eqnarray}
    e(t) &=& f(x) - \nu(x) \nonumber \\
    &=& W^{*T}\Bar{\Phi}_n^\sigma(x) + \epsilon(x) - W^T\left(\bar{\Phi}_n^\sigma(x)+G_\sigma\xi\right) \nonumber \\
    &=&\Tilde{W}^T\Bar{\Phi}_n^\sigma(x) + \epsilon(x)-W^T\left(G_\sigma\xi\right).
    \label{eq:ls_error}
\end{eqnarray}
the parameter update using gradient descent algorithm is given as
\begin{equation}
    \dot{W} = -\Gamma\bar{\Phi}_n^\sigma(x)e(t)
    \label{eq:DMARC_Stochastic_param_update}
\end{equation}
The parameter error dynamics can be found by differentiating $\Tilde{W}$ and using equation \eqref{eq:DMARC_Stochastic_param_update} as,
\begin{equation}
    \dot{\tilde{W}} = -\Gamma\bar{\Phi}_n^\sigma(x)e(t),
    \label{eq:DMRAC_ls_weight_update}
\end{equation}
    
This is a linear time varying differential equation in $\Tilde{W}$. Furthermore, note that if PE Condition using Theorem-\ref{theorem:DMRAC_PE_proof} is satisfied, then $\Tilde{W}$ is asymptotically bounded near zero solution.

\begin{theorem}
\label{theorem:DMRAC_weight_conv}
Consider the system model given by equation \eqref{eq:DMRAC_ls_system}, the online estimation model given by equation \eqref{eq:DMRAC_ls_estimate_model}, the stochastic deep feature gradient descent weight update law \eqref{eq:DMRAC_ls_weight_update}, and assume that the regressor function ${\Phi}_n^\sigma(x)$ is continuously differentiable and that the measurements ${\Phi}_n^\sigma(x) \in \mathcal{D}$ where $\mathcal{D}\in \mathbb{R}^m$ is a
compact set. If the ${\Phi}_n^\sigma(x)$ satisfy the PE condition, then the zero solution
of the weight error dynamics of equation \eqref{eq:DMRAC_ls_estimate_model} is globally uniformly ultimately bounded.
\end{theorem}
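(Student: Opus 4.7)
The plan is to use a quadratic Lyapunov function in the weight error and combine the PE property of the stochastic features (guaranteed by Theorem \ref{theorem:DMRAC_PE_proof}) with standard perturbed-nominal-system arguments to conclude ultimate boundedness. Take $V(\tilde W)=\tfrac12\operatorname{tr}(\tilde W^T\Gamma^{-1}\tilde W)$ as the Lyapunov candidate, which satisfies $\tfrac12\underline\lambda(\Gamma^{-1})\|\tilde W\|_F^2\le V(\tilde W)\le \tfrac12\bar\lambda(\Gamma^{-1})\|\tilde W\|_F^2$. Substituting the error expression \eqref{eq:ls_error} into the update law \eqref{eq:DMRAC_ls_weight_update} gives the perturbed linear time-varying system
\begin{equation}
\dot{\tilde W}=-\Gamma\bar\Phi_n^\sigma(x)\bar\Phi_n^\sigma(x)^T\tilde W-\Gamma\bar\Phi_n^\sigma(x)\epsilon(x)+\Gamma\bar\Phi_n^\sigma(x)(G_\sigma\xi)^TW,
\end{equation}
which has the canonical form \emph{nominal PE-driven gradient flow} $+$ \emph{bounded deterministic disturbance} $+$ \emph{zero-mean stochastic disturbance}.

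Next I would differentiate $V$ along this diffusion (Itô derivative, since $\xi$ is a Wiener increment). The dominant term is $-\operatorname{tr}\bigl(\tilde W^T\bar\Phi_n^\sigma\bar\Phi_n^{\sigma\,T}\tilde W\bigr)$, which is only negative semi-definite pointwise but, by Theorem \ref{theorem:DMRAC_PE_proof}, satisfies $\int_t^{t+T}\bar\Phi_n^\sigma(\tau)\bar\Phi_n^\sigma(\tau)^T d\tau\succeq\gamma I$. Using the compactness hypothesis $\Phi_n^\sigma(x)\in\mathcal D$ one can bound $\|\bar\Phi_n^\sigma\|\le\bar\phi$ and $\|G_\sigma\|\le\bar g$, and the projection operator in the outer-loop update ensures $\|W\|\le\mathcal W_b$; together with the approximation-error bound $\|\epsilon(x)\|\le\bar\epsilon^\sigma$ from the previous theorem, Young's inequality on the cross terms yields
\begin{equation}
\mathcal L V(\tilde W)\le -\tilde W^T\bar\Phi_n^\sigma\bar\Phi_n^{\sigma\,T}\tilde W+\kappa_1\|\tilde W\|+\kappa_2,
\end{equation}
with constants $\kappa_1,\kappa_2$ depending only on $\bar\phi,\bar g,\mathcal W_b,\bar\epsilon^\sigma$ and the eigenvalues of $\Gamma$.

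To convert the integrated PE bound into an exponential decay estimate on $V$, I would invoke the classical Anderson/Morgan--Narendra lemma (or equivalently the converse Lyapunov construction for PE LTV systems): under PE of $\bar\Phi_n^\sigma$, the nominal flow $\dot{\tilde W}=-\Gamma\bar\Phi_n^\sigma\bar\Phi_n^{\sigma\,T}\tilde W$ is uniformly globally exponentially stable, so there exists a time-varying quadratic Lyapunov function $V_{\mathrm{pe}}(t,\tilde W)$ with $\alpha_1\|\tilde W\|^2\le V_{\mathrm{pe}}\le \alpha_2\|\tilde W\|^2$ and $\partial_t V_{\mathrm{pe}}+\nabla_{\tilde W}V_{\mathrm{pe}}^T(-\Gamma\bar\Phi\bar\Phi^T\tilde W)\le -\alpha_3\|\tilde W\|^2$. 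Evaluating the full $\mathcal L V_{\mathrm{pe}}$ along the perturbed diffusion, the cross terms with $\epsilon$ and $G_\sigma\xi$ are majorized using $2ab\le\delta a^2+b^2/\delta$, so that
\begin{equation}
\mathcal L V_{\mathrm{pe}}\le -\tfrac{\alpha_3}{2}\|\tilde W\|^2+C,
\end{equation}
for a constant $C$ collecting the deterministic approximation error and the stochastic second-moment contribution $\tfrac12\operatorname{tr}(W^TG_\sigma G_\sigma^T W\,\|\nabla^2 V_{\mathrm{pe}}\|)$. An application of Dynkin's formula (or direct integration of the Itô inequality) then yields $\mathbb E[V_{\mathrm{pe}}(t,\tilde W(t))]\le e^{-\alpha_3 t/(2\alpha_2)}V_{\mathrm{pe}}(0)+2\alpha_2 C/\alpha_3$, which is the sought mean-square UUB estimate on $\tilde W$, with the ultimate bound shrinking as $\bar\epsilon^\sigma$ and $\bar g\mathcal W_b$ decrease.

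The main obstacle will be the first step above: the quadratic PE term is only negative \emph{in average}, so I cannot conclude negativity of $\mathcal LV$ pointwise from the naive Lyapunov candidate. This forces the use of a PE-aware Lyapunov function $V_{\mathrm{pe}}$ whose explicit construction requires the integral PE bound from Theorem \ref{theorem:DMRAC_PE_proof}; the switching signal $\sigma$ further complicates matters, but under the slow-switching guarantee of the algorithm (already invoked in the previous theorem's proof) the PE constant $\gamma$ can be taken uniform in $\sigma$, and a common $V_{\mathrm{pe}}$ can be used across modes. Once that Lyapunov function is in hand, the remaining calculations are routine completion-of-squares plus Dynkin's formula.
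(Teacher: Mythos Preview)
Your argument is sound and would establish the claim, but it is not the route the paper takes. You treat the PE condition in its classical time-integral form, correctly observe that $-\tilde W^T\bar\Phi_n^\sigma\bar\Phi_n^{\sigma\,T}\tilde W$ is only negative semidefinite pointwise, and therefore fall back on the Anderson/Morgan--Narendra converse-Lyapunov machinery plus an It\^o--Dynkin argument for the stochastic perturbation. The paper instead stays with the naive candidate $V(\tilde W)=\tfrac12\operatorname{tr}(\tilde W^T\Gamma^{-1}\tilde W)$ throughout and computes an ordinary time derivative: substituting the empirical mean $\hat\Phi_n^\sigma=\tfrac1N\sum_i\Phi_{n,i}^\sigma$ into $\dot V$ produces the quadratic term $-\tilde W^T\Omega\tilde W$ with
\[
\Omega=\tfrac1N\sum_{i,j}\bigl(\Phi_{n,i}^\sigma\bar\Phi_{n,j}^{\sigma\,T}+\Phi_{n,i}^\sigma(G_\sigma\xi_j)^T\bigr),
\]
and the ``induced PE'' of Theorem~\ref{theorem:DMRAC_PE_proof} (specifically the autocovariance bound $R_{\Phi^\sigma_n}(0)>0$) is invoked to assert that this \emph{instantaneous} ensemble matrix is positive definite. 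In other words, the paper exploits the stochastic draws at each time step to manufacture pointwise richness, so no time-varying PE Lyapunov function is needed; the remaining cross term with $\epsilon(x)$ is bounded by $\|\bar\Phi_n^\sigma\|_\infty\mathcal W_b\bar\epsilon^\sigma$ and a standard completion-of-squares gives the ultimate bound. Your approach is more rigorous and would survive scrutiny of the sign and eigenvalue bookkeeping better than the paper's somewhat loose inequalities, but it misses the paper's intended point that the Bayesian ensemble supplies excitation \emph{at every instant}, which is precisely why the simpler Lyapunov candidate suffices.
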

\begin{proof}
Consider a quadratic Lyapunov function 
\begin{equation}
    V(\tilde{W}) = \frac{1}{2}tr\left(\Tilde{W}^T\Gamma^{-1}\Tilde{W}\right).
    \label{eq:ls_lyapunov}
\end{equation}
Such that  $V(0) = 0$ and  $V(\tilde{W}) > 0, \forall \tilde{W} \neq 0$.  Since $V(\tilde{W})$  is quadratic, letting $\lambda_{min}(.)$ and $\lambda_{max}(.)$ denote the operators that return the minimum and maximum eigenvalue of a matrix, we have:
$$
\lambda_{min}\left(\Gamma^{-1}\right)\left\|\Tilde{W}\right\|_F^2 \leq V(\tilde{W}) \leq \lambda_{max}\left(\Gamma^{-1}\right)\left\|\Tilde{W}\right\|_F^2.
$$
Taking a time derivative along the trajectory \eqref{eq:DMRAC_ls_weight_update}, we have
\begin{eqnarray}
\dot{V}(\tilde{W}) &=& tr\left(\Tilde{W}^T\Gamma^{-1}\dot{\Tilde{W}}\right).
\end{eqnarray}
Using \eqref{eq:DMRAC_ls_weight_update}, \eqref{eq:ls_error} and the definition of empirical estimate of $\bar{\Phi}_n^\sigma(x)$ in \eqref{eq:DMRAC_empirical_phi}, the above expression cn be rewritten as,
\begin{eqnarray}
\dot{V}(\tilde{W}) &=& -\Tilde{W}^T\bar{\Phi}_n^\sigma(x)\bar{\Phi}_n^\sigma(x)^T\Tilde{W}-\bar{\Phi}_n^\sigma(x)\Tilde{W}^T\epsilon(x) \nonumber \\
&&-\Tilde{W}^T\bar{\Phi}_n^\sigma(x)\left(G_\sigma\xi\right)^T\Tilde{W}\\
&\leq& -\lambda_{max}(\Omega)\left\|\Tilde{W}\right\|^2-\left\|\bar{\Phi}_n^\sigma\right\|_\infty\mathcal{W}_b\bar{\epsilon}^\sigma ,\\
&\leq& -\frac{\lambda_{max}(\Omega)}{\lambda_{min}\left(\Gamma^{-1}\right)}V(\tilde{W})-\left\|\bar{\Phi}_n^\sigma\right\|_\infty\mathcal{W}_b\bar{\epsilon}^\sigma.
\end{eqnarray}
Where $\Omega$ is defined as follows,
\begin{equation}
    \Omega = \frac{1}{N}\sum_{i,j = 1}^N\left(\Phi_{n,i}^\sigma(x)\bar{\Phi}_{n,j}^\sigma(x)^T + \Phi_{n,i}^\sigma(x)(G_\sigma\xi_j)^T\right).
\end{equation}
Using P.E condition of $\Phi_{n,i}^\sigma(x)$, we know that $\lambda_{max}(\Omega) > 0$.
Therefore using Lyapunov stability theory if
\begin{equation}
    \left\|\Tilde{W}\right\| \geq \sqrt{\frac{\left\|\bar{\Phi}_n^\sigma\right\|_\infty\mathcal{W}_b\bar{\epsilon}^\sigma}{\lambda_{max}(\Omega)}},
\end{equation}
we have $\dot{V}(\tilde{W}) \leq 0$. Therefore
the set $W_\delta = \left\{\left\|\Tilde{W}\right\| \leq \sqrt{\frac{\left\|\bar{\Phi}_n^\sigma\right\|_\infty\mathcal{W}_b\bar{\epsilon}^\sigma}{\lambda_{max}(\Omega)}}\right\}$ is positively invariant, hence $\tilde{W}$ approaches the bounded near zero solution exponentially fast and remain are ultimately bounded.
\end{proof}
If Theorem-\ref{theorem:DMRAC_weight_conv} holds, then the adaptive weights $W(t)$ will approach exponentially fast and remain bounded in a compact neighborhood of the ideal weights $W^*$.

\section{S-DMRAC evaluation: Wing-Rock System}
This section evaluates the S-DMRAC controller on the Wing-Rock system \cite{luo1993control, monahemi1996control}. The aim of the controller is to follow the step reference command under unknown nonlinear uncertainties.

Let $\phi$ denote the roll angle of an
aircraft, $p$ denote the roll rate, $\delta_a$ denote the aileron control input, then a simplified model for wing rock dynamics is given by
\begin{eqnarray}
    \Dot{\phi} &=& p, \hspace{4mm}\\
    \Dot{p} &=& \delta_a + \Delta(x).
\end{eqnarray}
The system uncertainty is defined as
\begin{equation}
    \Delta(x) = W_0 + W_1\phi + W_2p + W_3|\phi|p + W_4|p|p + W_5\phi^3
\end{equation}
Where $x = [\phi, p]$. The true parameters are as follows
$[W_0 = 1.0,W_1 = 0.2314,W_2 =0.6918,W_3 = −0.6245,W_4 = 0.1,W_5 = 0.214]$.
The task of the controller is to follow a reference step command $r(t)$. The reference model chosen is a stable second order linear system with natural frequency of $2rad/s$ and damping ratio of $0.5$. The linear control gains are given by $k = [-4, -2]$ and $k_r = -4$ and the learning rate is set to $\Gamma = 10$. Initial conditions for the simulation are arbitrarily chosen to be $\phi = 1deg$, $p = 1deg/s$. The tolerance threshold for kernel independence test is selected to be $\epsilon_{tol} = 0.1$ for updating the basis vector $\mathcal{BV}(\sigma)$. Using high bias ''$W_0$", varied step size in tracking command and higer learning rate, we push the controller to their limits, to test the robusteness of S-DMRAC vs DMRAC controller.

Figure-\ref{fig:SDMRAC_states} show the closed-loop system performance in tracking the reference signal for the S-DMRAC controller. As we can see, the DMRAC has a larger oscillation in state tracking compared to S-DMRAC.
The presented controller S-DMRAC achieves better uncertainty estimate as shown in Fig-\ref{fig:SDMRAC_nuad} compared to the DMRAC controller. Due to the Bayesian property of including prior information to parameter update, which introduces natural regularization, for the same learning rate, S-DMRAC estimates are less noisy compared to DMRAC. Due to the stochastic nature of the S-DMRAC controller,  we also derive the confidence bound around the estimates Fig-\ref{fig:SDMRAC_nuad}. We can Observe around the region of the plot where the tracking signal has a variation (finitely exciting), the prediction confidence is high. The prediction confidence is low around the non P.E region of the reference signal; therefore, a large variance  Fig-\ref{fig:SDMRAC_nuad}. Due to this overall variation (Aleatoric and Epistemic), the deep features are characterized by the induced persistency of excitation, resulting in network weights $W$ for S-DMRAC evolve steady and display convergence-like properties compared to DMRAC Fig-\ref{fig:SDMRAC_Ws}.
\begin{figure}[tbh!]
    \includegraphics[width=1.0\linewidth]{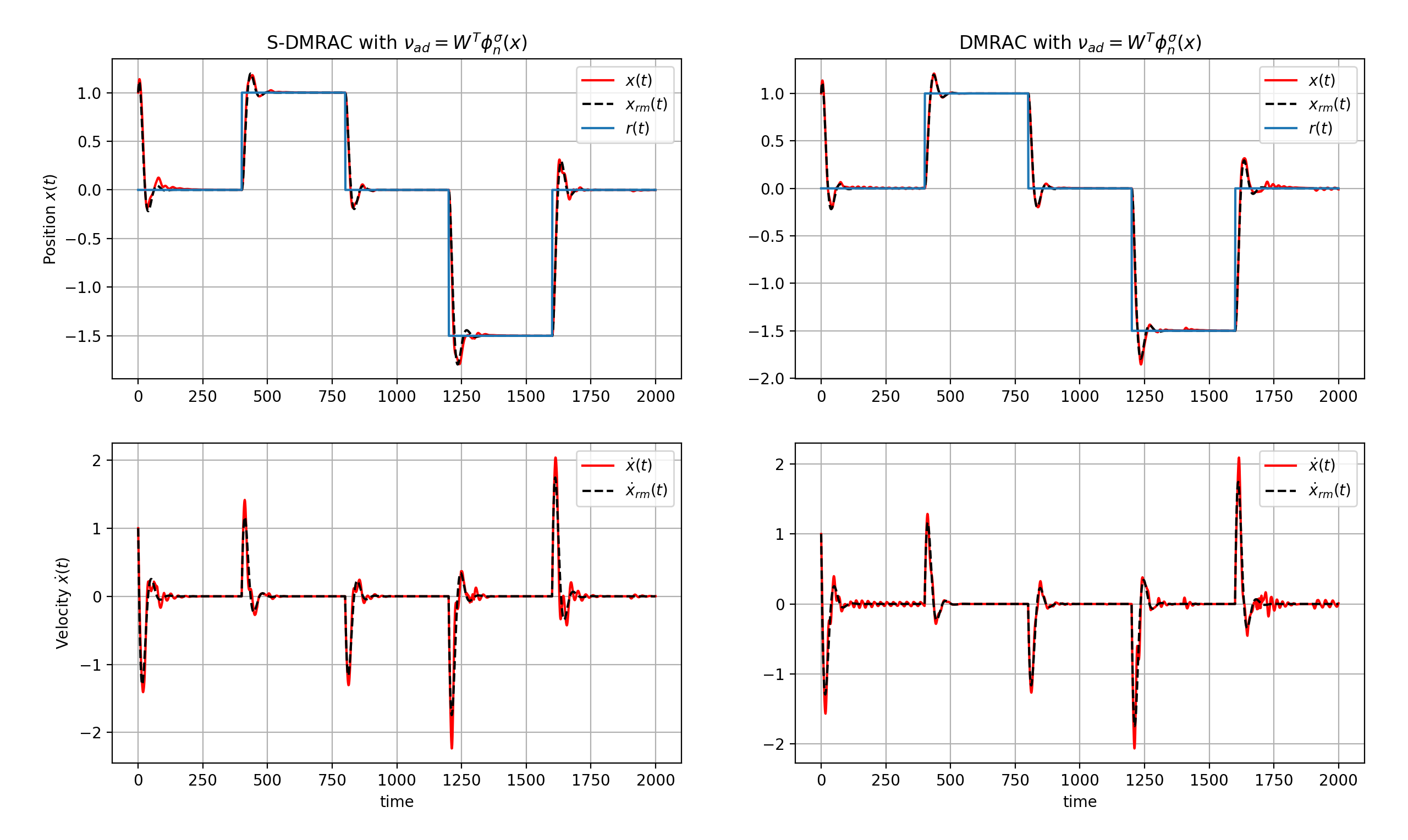}
    \caption{Wind-Rock Roll and Roll-rate for step command following using S-DMRAC controller}
    \label{fig:SDMRAC_states}
\end{figure}
\begin{figure}[tbh!]
\vspace{-5mm}
    \centering
    \includegraphics[width=1\linewidth]{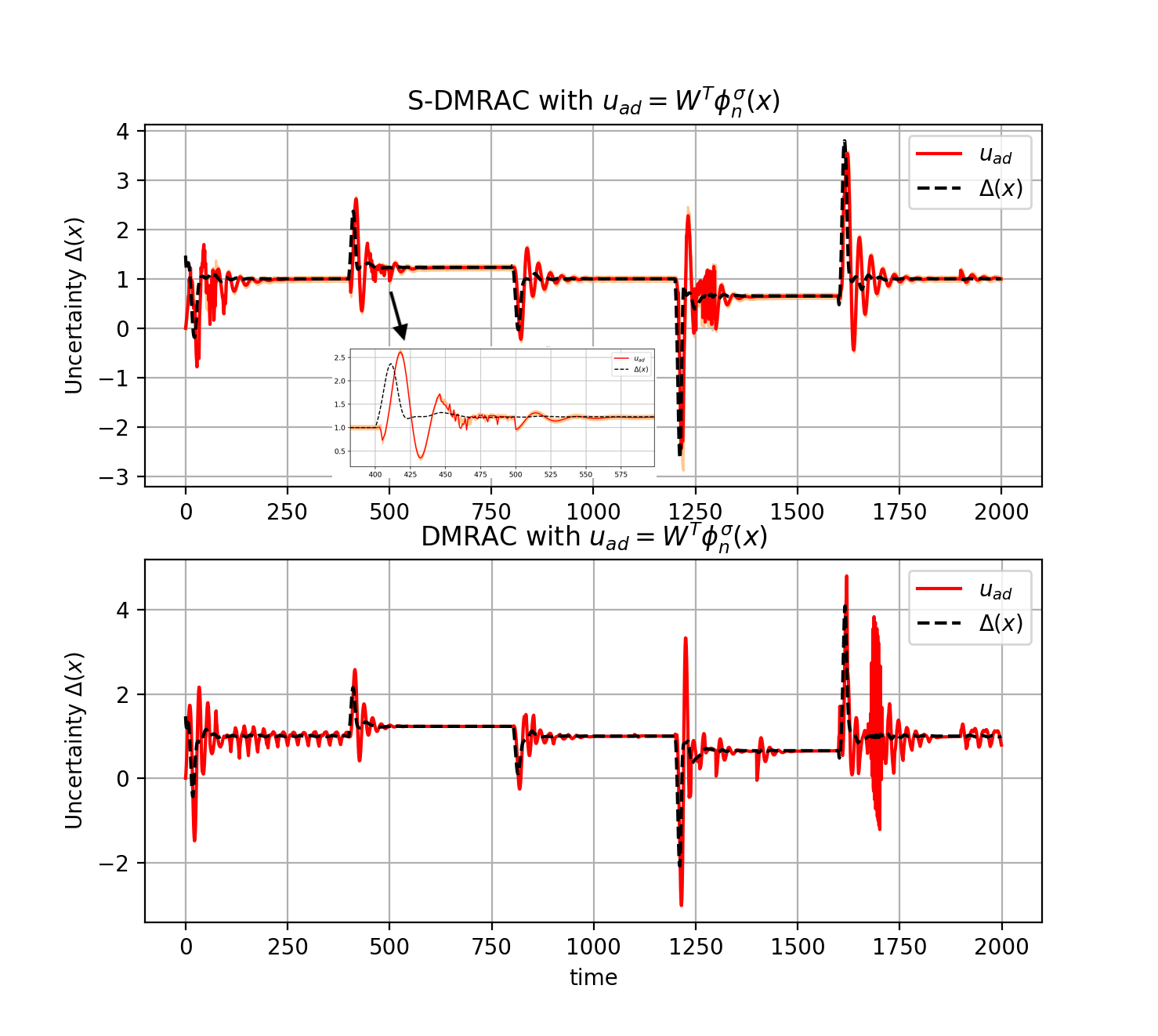}
    \caption{Uncertainty Estimate S-DMRAC vs DMRAC}
    \label{fig:SDMRAC_nuad}
\end{figure}
\begin{figure}[tbh!]
    \vspace{-4mm}
    \centering
    \includegraphics[width=1\linewidth]{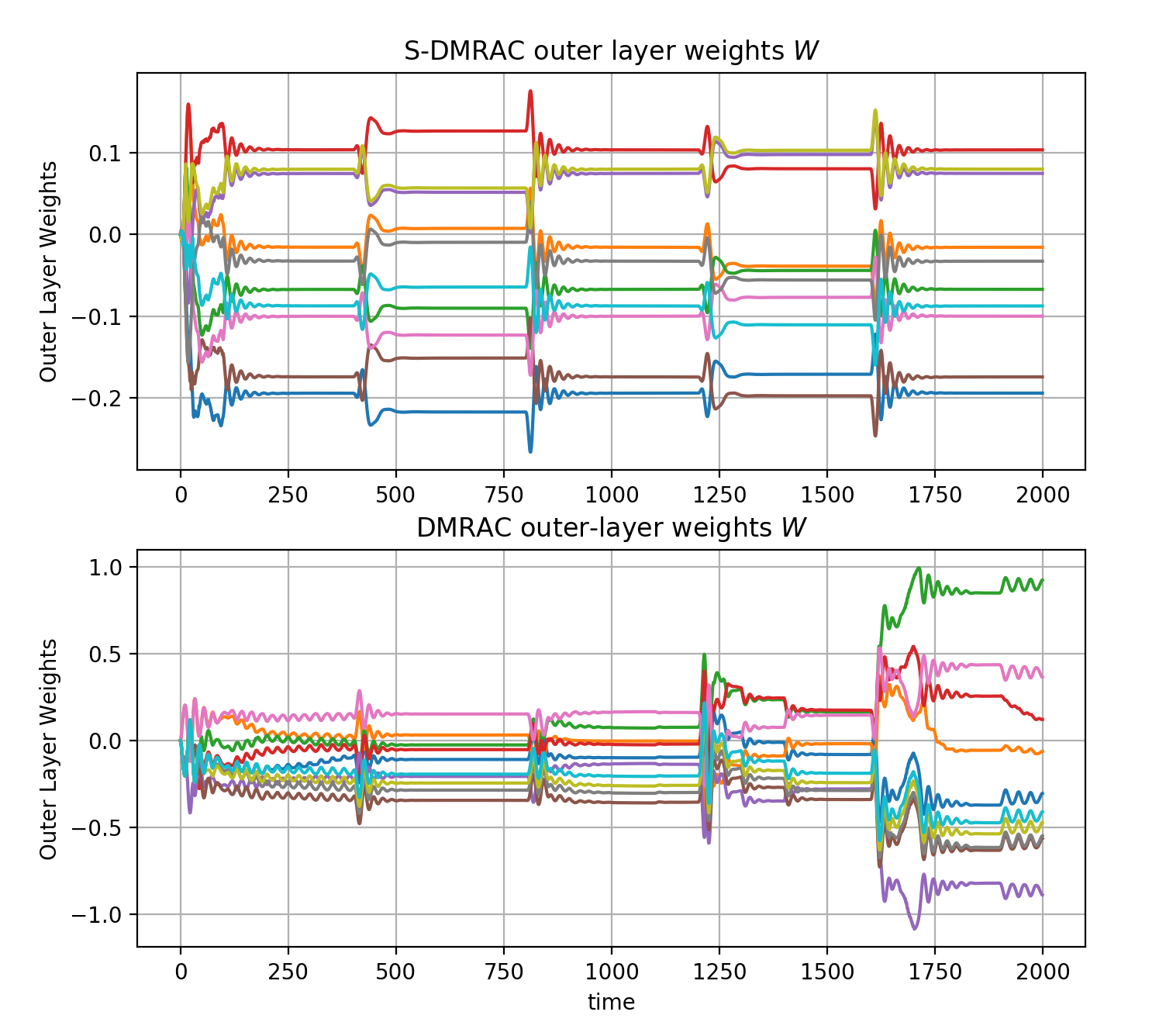}
    \caption{Outer-Layer weights $W$ evolution with time for S-DMRAC vs DMRAC}
    \label{fig:SDMRAC_Ws}
    \vspace{-3mm}
\end{figure}

\section{Conclusion}
\label{conclusions}
This paper presented an S-DMRAC adaptive controller using Bayesian neural networks to address feature designing for unstructured uncertainty. The proposed controller uses Bayesian DNN to model significant uncertainties without knowledge of the system's domain of operation. We provide theoretical proofs that, given the network predictions' stochastic nature, ensures "\textit{Induced persistency of excitation}" of the features leading to convergence of both state and network final layer weights. Numerical simulations with the Wing-rock model demonstrate the controller performance in achieving reference tracking in the presence of significant matched uncertainties. We compared the results with a state-of-the-art DMRAC controller and demonstrated the network weights have less transient and show convergence. We conclude that both DMRAC and S-DMRAC are highly efficient architectures for high-performance control of nonlinear systems with robustness and long-term learning properties.

\bibliographystyle{unsrt}
\bibliography{main}

\end{document}